\providecommand{\U}[1]{\protect\rule{.1in}{.1in}}
\theoremstyle{plain}
\newtheorem{theorem}{Theorem}
\newtheorem{corollary}{Corollary}
\newtheorem{lemma}{Lemma}
\newtheorem{problem}{Problem}
\newtheorem{definition}{Definition}
\newtheorem{remark}{Remark}
\title{Overshooting and $L^1$-Norms \\of a Class of Nyquist Filters}
\author{\IEEEauthorblockN{Gerhard Wunder, Saeed Afrasiabi-Gorgani}
\IEEEauthorblockA{Heisenberg Communications and Information Theory Group\\
Freie Universit\"{a}t Berlin\\
wunder@zedat.fu-berlin.de}}
\begin{document}

\maketitle

\begin{abstract}
To tightly control the signal envelope, estimating the peak regrowth between FFT samples is an important sub-problem in multicarrier communications. While the problem is well-investigated for trigonometric polynomials (i.e. OFDM), the impact of an aperiodic transmit filter is important too and typically neglected in the peak regrowth analysis. In this paper, we provide new bounds on the overshooting between samples for general multicarrier signals improving on available bounds for small oversampling factors. In particular, we generalize a result of \cite[Theorem~4.10]{litsyn_07}. Our results will be extended to bound overshooting of a class of Nyquist filters as well. Eventually, results are related to some respective $L^1$-properties of these filters with application to filter design.
\end{abstract}

\begin{IEEEkeywords}
PAPR, peak value, peak regrowth, Nyquist filter, $L_p$-norms, noise enhancement
\end{IEEEkeywords}

\section{Introduction}

The high peak-to-average power ratio (PAPR) of OFDM (or more general:
multicarrier) transmit signals is a crucial problem. It has re-attracted much
attention recently due to 5G \cite{ComMag5GNOW,Wunder2015_ACCESS}.
Multicarrier transmit signals are efficiently generated using some
(oversampled) FFT processing. However, to tightly control the signal envelope,
estimating the overshooting between FFT samples is an important sub-problem
which has been well-investigated in the literature
\cite{Wunder2013_SPM,Wunder2015_ACCESS}, particularly for trigonometric
polynomials. On the other hand, the impact of the aperiodic transmit filter is
important too and typically neglected in the peak regrowth analysis
(representing $\emph{not}$ trigonometric polynomials). Another problem is that
available bounds are quite loose for small oversampling factors which will be
typical in upcoming 5G multicarrier systems \cite{ComMag5GNOW} e.g. in FBMC
transmission using very large FFTs.

In this paper we provide new bounds on the overshooting between samples for
general multicarrier signals improving on available bounds for small
oversampling factors. Our results will be extended to bound overshooting of a
class of Nyquist filters as well. Eventually, results are related to some
respective $\mathcal{L}^{1}$-properties of these filters with application to
filter design.

\textbf{Notation}: The collection of signals whose $p^\mathrm{th}$ power is integrable
is denoted by $\mathcal{L}^{p}\left(  \mathbb{R}\right)  $ with the common
norm $\left\Vert \cdot\right\Vert _{p}$. For $p=\infty$ the norm is given by
the supremum norm. For further purposes let us also introduce the space of
bounded, continuous signals over $\mathbb{R}$, denoted by $C\left(
\mathbb{R}\right)  $ and endowed with the supremum norm.

\section{Preliminaries}

\subsection{Band-limited signals}

We start by investigating the standard band-limited setting: A signal is
called band-limited with bandwidth $B$ if the Fourier transform is supported on $\left[
-B,B\right]  $. The set of band-limited signals with bandwidth $B$ in
$\mathcal{L}^{p}\left(  \mathbb{R}\right)  $ form the Paley-Wiener space
$\mathcal{PW}_{B}^{p}$. The spaces $\mathcal{L}^{p}\left(  \mathbb{R}\right)
,C\left(  \mathbb{R}\right)  ,C_{T}\left(  \mathbb{R}\right)  $\ contain
generally signals of which the spectrum cannot be defined in the classical
sense so that it becomes distributional. Note that the inclusions
$\mathcal{PW}_{B}^{1}\subset\mathcal{PW}_{B}^{2}\subset\ldots\subset
\mathcal{PW}_{B}^{\infty}$ hold for Paley-Wiener spaces.

A signal $f\in$ $\mathcal{PW}_{B}^{p},1\leq p<\infty$, can be recovered by its
samples by applying the Shannon sampling series; denoting the Nyquist-rate
(critical) samples $t_{l}=\frac{\pi l}{B},l\in\mathbb{Z}$, we have
\begin{equation}
f\left(  t\right)  =\sum_{l=-\infty}^{\infty}f\left(  t_{l}\right)  \frac
{\sin\left[  B\left(  t-t_{l}\right)  \right]  }{B\left(  t-t_{l}\right)  }.
\label{eqn:shannon}%
\end{equation}
Unfortunately, the sampling series (\ref{eqn:shannon}) fails to converge in
general for $p=\infty$. In this case, for every signal $f\in\mathcal{PW}%
_{B}^{\infty}$ Sch\"{o}nhage's sampling series
\begin{align*}
f\left(  t\right)  &= f^{\prime}\left(  0\right)  \frac{\sin\left(
B\theta\right)  }{B}+f\left(  0\right)  \frac{\sin\left(  B\theta\right)
}{B\theta} \\
&\quad+t\sum_{l=-\infty,l\neq0}^{\infty}\frac{f\left(  t_{l}\right)  }%
{l}\frac{\sin\left[  B\left(  t-t_{l}\right)  \right]  }{B\left(
t-t_{l}\right)  }%
\end{align*}
can be applied converging uniformly on compact subsets of {$\mathbb{R}$}. Let
us define the following family of kernels:

\begin{definition}
A set $\mathcal{M}_{L_{\epsilon}}^{B}$ is called a \emph{reproducing kernel
set} if%
\[
\mathcal{M}_{L_{\epsilon}}^{B}:=\!\! \left\{  g\in L^{1}\left(  \mathbb{R}\right)
,\widehat{g}\left(  \omega\right)  =\!\!\left\{ \!
\begin{array}
[c]{cc}%
1 & \left\vert \omega\right\vert \leq B\\
\widehat{g}_{d}\left(  \omega\right)  & B\leq\left\vert \omega\right\vert \leq
L_{\epsilon}B\\
0 & \text{elsewhere}%
\end{array}
\right.  \right\}  ,
\]
where $\widehat{g}_{d}\left(  \omega\right)$ is a real function with
$0\leq\widehat{g}_{d}\left(  \omega\right)  \leq1$, $\widehat{g}_{d}\left(
B\right)  =1$, $\widehat{g}_{d}\left(  L_{\epsilon}B\right)  =0$.

The real number $L_{\epsilon}\geq1$ is called the \emph{bandwidth expansion
factor}.
\end{definition}

For some reasons that will become clear later on we assume $\widehat{g}%
_{d}\left(  \omega\right)  $ to be a non-increasing function. The set
$\mathcal{M}_{L_{\epsilon}}^{B}$ is in fact not empty; an example of a kernel
is given by the trapezoidal kernel
\begin{equation}
S_{L_{\epsilon}}\left(  t\right)  =\frac{2\sin\left(  \frac{\left(
L_{\epsilon}+1\right)  Bt}{2}\right)  \sin\left(  \frac{\left(  L_{\epsilon
}-1\right)  Bt}{2}\right)  }{\pi\left(  L_{\epsilon}-1\right)  Bt^{2}},
\label{eqn:trapezoidal_kernel}%
\end{equation}
of which the Fourier transform is depicted in Fig.~\ref{fig:fejer}a.

\begin{figure*}[t]
\centering
\includegraphics[width=0.7\textwidth, trim= 0.5cm 5.5cm 1cm 6cm, clip=true]{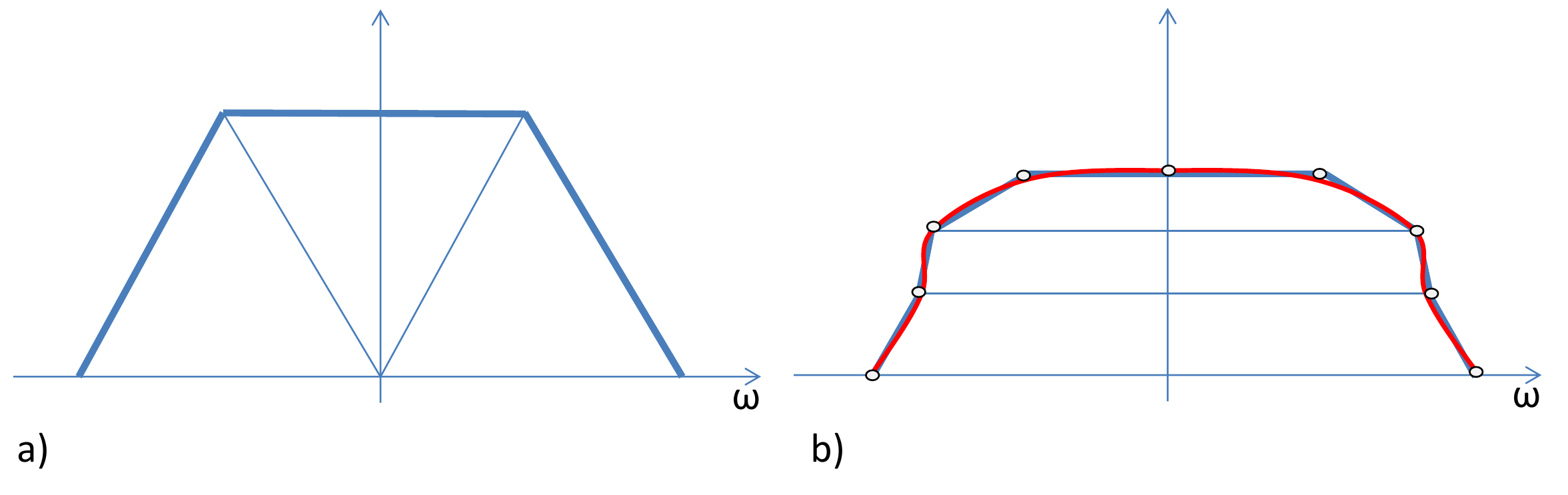}
\caption{a) The Fourier transform of the trapezoidal kernel ($L=L_{\epsilon}$), b) Approximation of Nyquist filters response by trapezoidal kernels.}
\label{fig:fejer}
\end{figure*}

The reproducing kernel set $\mathcal{M}_{L_{\epsilon}}^{B}$ can be
used together with oversampling (beyond Nyquist-rate sampling) for alternative
representations of Paley-Wiener spaces. Define $t_{l,L}:=\frac{\pi l}{LB}$
where the real number $L\geq1$ is the \emph{oversampling factor}. By a simple
application of the bounded convergence theorem and provided that $L\geq
L_{\epsilon}$ we have%
\[
f\left(  t\right)  =\frac{\pi}{LB}\sum_{l=-\infty}^{+\infty}f\left(
t_{l,L}\right)  g\left(  t-t_{l,L}\right)  ,\;f\in\mathcal{PW}_{B}^{\infty},
\]
i.e. the sampling of $f$ together with the kernel $g$ reproduces $f$. The
following theorem is a generalization.

\begin{theorem}
\label{theorem:poisson} For any $f\in\mathcal{PW}_{B}^{\infty}$ we have:
\begin{equation}
f\left(  t\right)  =\frac{\pi}{LB}\sum_{l=-\infty}^{+\infty}f\left(
t_{l,L}\right)  g\left(  t-t_{l,L}\right)  \label{eqn:shannon_l1}%
\end{equation}
where $g\in\mathcal{M}_{L_{\epsilon}}^{B},L\geq\frac{L_{\epsilon}+1}{2}$.
\end{theorem}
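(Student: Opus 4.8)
The plan is to move to the Fourier domain, recognize the right-hand side of \eqref{eqn:shannon_l1} as the inverse transform of a \emph{periodized} spectrum, and then use the oversampling condition to annihilate every spectral replica except the baseband one. Write $T=\frac{\pi}{LB}$ for the sampling period, so that $t_{l,L}=lT$ and the replica spacing in frequency is $\frac{2\pi}{T}=2LB$; let $\Phi$ denote the series on the right of \eqref{eqn:shannon_l1}. Convergence of this series (absolute and uniform on compacta for kernels with fast decay such as the trapezoidal one, where $g$ decays like $t^{-2}$, and in general in the sense supplied by the bounded convergence argument already used for the $L\geq L_\epsilon$ reproduction) makes $\Phi$ a well-defined bounded continuous signal whose spectrum is supported in $[-L_\epsilon B,L_\epsilon B]$.

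First I would compute the (distributional) Fourier transform of $\Phi$. Transforming termwise and using that the transform of $g(\cdot-lT)$ is $\widehat g(\omega)e^{-i\omega lT}$ gives
\[
\widehat\Phi(\omega)=T\,\widehat g(\omega)\sum_{l=-\infty}^{\infty} f(lT)\,e^{-i\omega lT}.
\]
By the Poisson summation formula the bracketed sum is the periodization of $\widehat f$, namely $\frac{1}{T}\sum_{k}\widehat f(\omega-2LBk)$, whence
\[
\widehat\Phi(\omega)=\widehat g(\omega)\sum_{k=-\infty}^{\infty}\widehat f(\omega-2LBk).
\]
The goal is then to show this equals $\widehat f(\omega)$.

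The decisive step is the support analysis of the replicas. The shifted copy $\widehat f(\cdot-2LBk)$ lives on $2LBk+[-B,B]$, while $\widehat g$ is supported on $[-L_\epsilon B,L_\epsilon B]$ and equals $1$ on $[-B,B]\supseteq\operatorname{supp}\widehat f$. For $k=0$ and $|\omega|\leq B$ we thus have $\widehat g(\omega)\widehat f(\omega)=\widehat f(\omega)$. For the nearest nonzero replica $k=\pm1$, its support begins at $2LB-B$; the hypothesis $L\geq\frac{L_\epsilon+1}{2}$ is exactly equivalent to $2LB-B\geq L_\epsilon B$, so this replica (and a fortiori all $|k|\geq2$) meets the support of $\widehat g$ at most at the single endpoint $\pm L_\epsilon B$, where $\widehat g=\widehat g_d(L_\epsilon B)=0$. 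Hence $\widehat g(\omega)\,\widehat f(\omega-2LBk)=0$ for every $k\neq0$, and only the baseband term survives; note this also forces $L\geq1$, so no replica interferes on $[-B,B]$ either. Therefore $\widehat\Phi=\widehat g\cdot\widehat f=\widehat f$ as compactly supported distributions, and inverting the transform yields $\Phi=f$, which is \eqref{eqn:shannon_l1}.

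The hard part is the analytic justification of the two interchanges for $f\in\mathcal{PW}_B^\infty$: the termwise transform of the only conditionally controlled sampling series, and the Poisson identity itself, since for a merely bounded $f$ the spectrum is distributional and the classical Poisson formula need not apply verbatim. I would handle this exactly as in the already-established $L\geq L_\epsilon$ case via the bounded convergence theorem, i.e. by testing against Schwartz functions whose transforms are supported away from the replica bands, or by approximating $f$ in $\mathcal{PW}_B^1$, where Poisson summation is classical, and then passing to the limit using the uniform-on-compacta convergence noted above. The monotonicity assumption on $\widehat g_d$ plays no role here; only the vanishing $\widehat g_d(L_\epsilon B)=0$ and the total support width $L_\epsilon B$ enter, which is what pins down the sharp constant $\frac{L_\epsilon+1}{2}$.
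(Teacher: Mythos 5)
Your proposal is correct and takes essentially the same approach as the paper: the paper's entire proof is the one-line remark that, since $g\in\mathcal{L}^{1}\left(\mathbb{R}\right)$, the identity follows from the classical Poisson sum formula, and your Fourier-domain argument (periodize $\widehat{f}$ via Poisson summation, then observe that $L\geq\frac{L_{\epsilon}+1}{2}$ is exactly the condition $2LB-B\geq L_{\epsilon}B$ under which every nonzero replica misses the support of $\widehat{g}$, up to the endpoint where $\widehat{g}_{d}\left(L_{\epsilon}B\right)=0$) is precisely that argument carried out in detail, including a sensible plan for the distributional subtleties when $f\in\mathcal{PW}_{B}^{\infty}$.
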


\begin{proof}
Since $g\in\mathcal{L}^{1}\left(  \mathbb{R}\right)  $ this can be shown using
the classical Poisson sum formula.
\end{proof}

\begin{remark}
Notably, if the kernel satisfies the Nyquist intersymbol interference (ISI)
criterion such that for some $(L_{\epsilon}+1)B/2\leq LB\leq L_{\epsilon}B$%
\begin{equation}
\sum_{k=-\infty}^{+\infty}\widehat{g}\left(  \omega-2LB\right)  =C_{N}%
\;\forall\omega, \label{eqn:NyquistISI}%
\end{equation}
then the kernel is called a \emph{Nyquist filter}, where $C_{N}>0$ is some
constant independent of $\omega$. $LB$ is called the Nyquist (angular)
frequency (half the sampling rate). Moreover, if such a filter is sampled
higher than as dictated by (\ref{eqn:NyquistISI}) then it is called
\emph{faster than Nyquist} (FTN) signaling.
\end{remark}

\subsection{Problem statement: Overshooting}

Overshooting is a classical problem in multicarrier communications and will be
referred to here as the \emph{peak value problem} (PVP). PVP investigates peak
regrowth between the samples of a (possibly oversampled) band-limited signal.
For a formal definition of the problem let $\left\Vert f\right\Vert _{\Delta
t,p}$ be $l_{p}$-sequence norm obtained from sampling $f$ with rate $2LB$.

\begin{problem}
\emph{Peak Value Problem:} Find a ``good\textquotedblright\ (i.e. tight) upper
bound on the constant
\[
C_{1}\left(  L\right)  :=\sup_{\left\Vert f\right\Vert _{t_{1,L},\infty}%
\leq1,f\in\mathcal{PW}_{B}^{\infty}}\left\Vert f\right\Vert _{\infty}.
\]

\end{problem}

It is interesting to note that $C_{1}\left(  L\right)  $ is independent of
$B$. We further note that $C_{1}\left(  L\right)  $ is not defined for $L=1$
(i.e. $C_{1}\left(  1\right)  =+\infty$) \cite{wunder_03_sig}. Eventually,
existence of a function $f$ such that $C_{1}\left(  L\right)  =\left\Vert
f\right\Vert _{\infty}$ is guaranteed \cite{boche_02_zamm}. We have from
\cite{wunder_03_sig}:

\begin{theorem}
\label{theorem:c1} Let $L>1$. We have:
\begin{equation}
C_{1}\left(  L\right)  =f_{L}^{\ast}\left(  t_{L}^{\ast}\right)  \leq\frac
{1}{\cos\left(  \frac{\pi}{2L}\right)  }. \label{eqn:c1_cos}%
\end{equation}
Moreover, if $L\in${$\mathbb{N}$}, then
\[
t_{L}^{\ast}=\frac{1}{2L},f_{L}^{\ast}\left(  t\right)  =\frac{\cos\left[
\pi\left(  t-\frac{1}{2L}\right)  \right]  }{\cos\left[  \frac{\pi}%
{2L}\right]  }%
\]
and%
\[
C_{1}\left(  L\right)  =\frac{1}{\cos\left(  \frac{\pi}{2L}\right)  }%
\]

\end{theorem}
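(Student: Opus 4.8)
The plan is to prove the two directions separately, first normalizing and then attacking the (harder) upper bound. Since $C_{1}(L)$ does not depend on $B$, I would fix $B=\pi$, so that the oversampled grid becomes $t_{l,L}=l/L$ and $\cos(\pi t)\in\mathcal{PW}_{\pi}^{\infty}$. Writing $C(t_{0}):=\sup\{\,|f(t_{0})| : f\in\mathcal{PW}_{\pi}^{\infty},\ \sup_{l}|f(l/L)|\le 1\,\}$, note that both the sampling constraint and the grid $\{l/L\}$ are invariant under translation by $1/L$ and under reflection, so $C$ is even and $1/L$-periodic, hence symmetric about every point $m/(2L)$. This reduces the evaluation point to $t_{0}\in[0,\tfrac{1}{2L}]$, with $C(0)=1$ at the endpoint lying on a sample. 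The content of the theorem is that the maximum over this interval sits at the other endpoint $t_{L}^{\ast}=\tfrac{1}{2L}$, the midpoint between two neighbouring samples, which is exactly where the worst regrowth is expected.

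\emph{Lower bound (achievability for integer $L$).} I would exhibit the candidate extremizer $f_{L}^{\ast}(t)=\cos[\pi(t-\tfrac{1}{2L})]/\cos(\tfrac{\pi}{2L})$ and verify the three required properties. Expanding $f_{L}^{\ast}(t)=\cos(\pi t)+\tan(\tfrac{\pi}{2L})\sin(\pi t)$ shows $\widehat{f_{L}^{\ast}}$ is supported on $\{\pm\pi\}\subset[-\pi,\pi]$, so $f_{L}^{\ast}\in\mathcal{PW}_{\pi}^{\infty}$. At the samples, $f_{L}^{\ast}(l/L)=\cos(\tfrac{\pi(2l-1)}{2L})/\cos(\tfrac{\pi}{2L})$; since the arguments $\tfrac{\pi(2l-1)}{2L}$ run through odd multiples of $\tfrac{\pi}{2L}$ and $L\in\mathbb{N}$, the largest value of $|\cos|$ among them is $\cos(\tfrac{\pi}{2L})$ (attained at $2l-1\equiv\pm1\bmod 2L$), whence $\sup_{l}|f_{L}^{\ast}(l/L)|=1$ and $f_{L}^{\ast}$ is feasible. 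Finally $f_{L}^{\ast}(\tfrac{1}{2L})=1/\cos(\tfrac{\pi}{2L})$, giving $C_{1}(L)\ge\sec(\tfrac{\pi}{2L})$.

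\emph{Upper bound.} This is the crux. I would write point evaluation as a sampling series: using the reproducing-kernel representation of Theorem~\ref{theorem:poisson} (equivalently Poisson summation on $\mathcal{PW}_{\pi}^{\infty}$), every $f$ satisfies $f(t_{0})=\sum_{l}c_{l}(t_{0})\,f(l/L)$ for a suitable $c(t_{0})\in\ell^{1}$, so that $|f(t_{0})|\le\|c(t_{0})\|_{\ell^{1}}\sup_{l}|f(l/L)|$ and hence $C(t_{0})\le\inf_{c}\|c(t_{0})\|_{\ell^{1}}$ over all valid representations. By Hahn--Banach / linear-programming duality this inequality is in fact an equality, so the task becomes the construction of the \emph{optimal} representation and the evaluation of its $\ell^{1}$-norm. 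For integer $L$ the sign pattern $\operatorname{sgn}f_{L}^{\ast}(l/L)$ furnishes the matching dual certificate: one checks that the representation whose coefficients carry these signs has $\ell^{1}$-norm exactly $\sec(\tfrac{\pi}{2L})$, which closes the gap, forces the maximizing point to be $t_{0}=\tfrac{1}{2L}$, and yields $C_{1}(L)=\sec(\tfrac{\pi}{2L})$ with $f_{L}^{\ast}$ extremal (its existence being guaranteed by \cite{boche_02_zamm}).

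The main obstacle is precisely this matching step --- proving that \emph{no} band-limited competitor beats the cosine, i.e.\ exhibiting the optimal $c(t_{0})$ (or, dually, the equioscillation/sign structure) that makes the $\ell^{1}$ bound tight. I expect the clean constant $\sec(\tfrac{\pi}{2L})$ to be attained only for integer $L$, because there the cosine is genuinely feasible; for non-integer $L$ its samples can exceed $1$, so the cosine is no longer a competitor and only the inequality $C_{1}(L)\le\sec(\tfrac{\pi}{2L})$ survives. Note that monotonicity of $C_{1}$ in $L$ alone does not deliver this inequality for non-integer $L$ (it only brackets $C_{1}(L)$ between consecutive integer values, and $\sec(\tfrac{\pi}{2L})$ is itself decreasing in $L$), so the general bound genuinely requires the duality estimate of \cite{wunder_03_sig} rather than interpolation from the integer case.
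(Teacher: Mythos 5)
Your normalization, the $1/L$-periodicity and reflection symmetry of $C(t_{0})$, and the lower-bound construction are all sound: the feasibility check of $f_{L}^{\ast}$ for integer $L$ (odd numerator $2l-1$ against even denominator $2L$, so every sample argument stays at distance at least $\pi/(2L)$ from the nearest multiple of $\pi$, whence $|f_{L}^{\ast}(l/L)|\le 1$) is exactly right, and $C_{1}(L)\ge\sec(\tfrac{\pi}{2L})$ for $L\in\mathbb{N}$ follows. Note for calibration that the paper itself contains no proof of this theorem --- it is quoted from \cite{wunder_03_sig}, with the existence of the extremal pair $(f_{L}^{\ast},t_{L}^{\ast})$ delegated to \cite{boche_02_zamm} just as you do --- so the relevant benchmark is the argument in that reference.

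The upper bound, which you correctly identify as the crux, has a genuine gap, in two places. First, the duality step is not valid as stated: the dual of $\ell^{\infty}$ is not $\ell^{1}$, so Hahn--Banach applied to the evaluation functional on the space of sample sequences produces a finitely additive functional, not an $\ell^{1}$ representer; the claimed equality $C(t_{0})=\inf_{c}\lVert c(t_{0})\rVert_{\ell^{1}}$ would need an additional weak-$\ast$ approximation or finite-section argument that you do not supply. Second, even granting that equality, you never construct the certificate: the assertion that the representation carrying the sign pattern of $f_{L}^{\ast}(l/L)$ has $\ell^{1}$-norm exactly $\sec(\tfrac{\pi}{2L})$ \emph{is} the theorem, and for non-integer $L$ you explicitly fall back on ``the duality estimate of \cite{wunder_03_sig}'' --- i.e.\ you cite the result you set out to prove. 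The route actually used in the cited source avoids duality altogether: it is the band-limited analogue of the Ehlich--Zeller argument for trigonometric polynomials, resting on a Riesz/Duffin--Schaeffer-type comparison lemma --- if $f\in\mathcal{PW}_{\pi}^{\infty}$ is real with $f(\xi)=\lVert f\rVert_{\infty}$, then $f(t)\ge\lVert f\rVert_{\infty}\cos\left(\pi(t-\xi)\right)$ for $|t-\xi|\le\tfrac{1}{2}$, proved by zero-counting against the cosine using Bernstein's inequality $\lVert f^{\prime}\rVert_{\infty}\le\pi\lVert f\rVert_{\infty}$. Since some sample $l/L$ lies within $\tfrac{1}{2L}$ of $\xi$, this gives $1\ge f(l/L)\ge\lVert f\rVert_{\infty}\cos(\tfrac{\pi}{2L})$ immediately, for \emph{every} $L>1$, integer or not (the complex case reduces to the real one by multiplying $f$ by a unimodular constant), and it also forces $t_{L}^{\ast}=\tfrac{1}{2L}$ once combined with your lower bound. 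If you want a self-contained proof, prove that comparison lemma; as written, your upper bound is a plan plus a citation, not a proof.
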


Note that PVB is solved when restricting $\mathcal{PW}_{B}^{\infty}$ to%
\begin{align*}
\mathcal{T}_{N}:=\bigg\{  f\in\mathcal{PW}_{N}^{\infty};&\;f\left(
\theta\right)  =\frac{a_{0}}{2}+\sum_{k=1}^{N}a_{k}\cos\left(  k\theta\right) \\
&  +b_{k}\sin\left(  k\theta\right)  ,\quad a_{k},b_{k}\in\mathbb{R} \bigg\}  ,
\end{align*}
containing all degree $N$ trigonometric polynomials (with normalized angular
bandwidth $N$). Here, we define
\begin{equation}
C_{1}^{\mathcal{T}_{N}}\left(  N_{1}\right)  :=\max_{\left\Vert f\right\Vert
_{\frac{2\pi}{N_{1}},\infty}\leq1,f\in\mathcal{T}_{N}}\left\Vert f\right\Vert
_{\infty}, \label{eqn:c1_tn}%
\end{equation}
i.e. the optimizer set in the peak value problem is restricted to
$\mathcal{T}_{N}$. The extremal signals $f_{L}^{\ast}$ of (\ref{eqn:c1_tn})
are denoted as $\mathcal{T}$-$\left(  N,N_{1}\right)  $-extremal which can be
exactly calculated \cite{jetter_01_theo_approx}.

Altogether, PVP is quite well-understood. However, two problems occur in the
context of general multicarrier signals:

\begin{itemize}
\item[1)] The bounds in (\ref{eqn:c1_cos}), (\ref{eqn:c1_sqrt}) are
quite good for $L\geq2$; if $1\leq L\leq2$ the provided bounds are quite loose

\item The theorems require $L\geq L_{\epsilon}$ which is typically not the
case for Nyquist filters
\end{itemize}

The latter observations are a main motivation for the new approach derived in
the following sections.

\subsection{A related problem}

From (\ref{eqn:shannon_l1}), we obtain the following inequality
\[
\sup_{t\in\mathbb{R}}\left\vert f\left(  t\right)  \right\vert \leq\left\Vert
f\right\Vert _{t_{1,L},\infty}\cdot\sup_{t\in\left[  0,\frac{\pi}{LB}\right]
}\frac{\pi}{LB}\sum_{l=-\infty}^{+\infty}\left\vert g\left(  t-t_{l,L}\right)
\right\vert <\infty,
\]
i.e. every kernel $g\in\mathcal{M}_{L_{\epsilon}}^{B}$ defines a bounded,
linear operator $T_{g,L}:\mathcal{PW}_{B}^{\infty}\rightarrow\mathcal{PW}%
_{B}^{\infty}$. In communication context, the samples do not represent the
samples of a band-limited signal with respect to the bandwidth defined by $B$.
It is therefore reasonable to extend the definition region of the operator
$T_{g,L}$ to the space $C\left(  \mathbb{R}\right)  $, i.e.
\[
T_{g}^{L}:C\left(  \mathbb{R}\right)  \rightarrow\mathcal{PW}_{L_{\epsilon}%
B}^{\infty},f\hookrightarrow\frac{\pi}{LB}\sum_{l=-\infty}^{\infty}f\left(
t_{1,L}\right)  \,g\left(  t-t_{1,L}\right)  .
\]
The norm of this operator is given by
\[
\left\vert T_{g}^{L}\right\vert =\sup_{\left\Vert f\right\Vert _{\infty}%
\leq1,f\in C\left(  \mathbb{R}\right)  }\left\Vert T_{g}^{L}f\right\Vert
_{\infty}.
\]
The operator norm represents the enhancement of errors in the samples. This
leads us to the problem:\ 

\begin{problem}
\emph{Related problem:} Find a ``good\textquotedblright\ upper bound on the
operator norm
\[
C_{2}\left(  L,L_{\epsilon}\right)  :=\inf_{g\in\mathcal{M}_{L_{\epsilon}}%
^{B}}\left\vert T_{g}^{L}\right\vert .
\]
If $L=L_{\epsilon}$ then $C_{2}\left(  L,L_{\epsilon}\right)  :=C_{2}\left(
L\right)  $.
\end{problem}

Again, note that $C_{2}\left(  L,L_{\epsilon}\right)  $ is independent of $B$.
For the purposes of filter design it is also interesting which kernel actually
attains this bound \cite{wunder_03_sig}. These filters will be called
\emph{extremal filters} and their existence is established in the next theorem.

\begin{theorem}
\label{theorem:c2_exist} For any $L,L_{\epsilon}>1$ there is an extremal
signal $f^{\ast}$, a time instance $t^{\ast}$, and a kernel $g^{\ast}%
\in\mathcal{M}_{L_{\epsilon}}^{B}$ such that $C_{2}\left(  L,L_{\epsilon
}\right)  =(T_{g^{\ast}}^{L}f^{\ast})\left(  t^{\ast}\right)  $.
\end{theorem}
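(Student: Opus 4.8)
The plan is to separate the inner supremum over $f$, the supremum over $t$, and the outer infimum over $g$, establish attainment of each, and then reassemble a single triple. First I would put the operator norm into a form free of $f$. For fixed $g$ (which is real, since $\widehat{g}$ is real and even) and fixed $t$, the map $f\mapsto(T_{g}^{L}f)(t)$ is a bounded linear functional of the sample vector $(f(t_{l,L}))_{l}$, and as $f$ ranges over the unit ball of $C(\mathbb{R})$ this vector ranges over all of $[-1,1]^{\mathbb{Z}}$ (any bounded pattern is realised by a continuous interpolant of sup-norm at most one). Pairing against the summable sequence $(g(t-t_{l,L}))_{l}$ — summability being the finiteness already recorded for $\frac{\pi}{LB}\sum_{l}|g(t-t_{l,L})|$ — the maximum is attained at $f(t_{l,L})=\operatorname{sign}g(t-t_{l,L})$, whence $\sup_{\|f\|_{\infty}\le1}|(T_{g}^{L}f)(t)|=\phi_{g}(t):=\frac{\pi}{LB}\sum_{l}|g(t-t_{l,L})|$. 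Interchanging the two suprema gives $|T_{g}^{L}|=\sup_{t}\phi_{g}(t)$.

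Next, for fixed $g$ I would extract $t^{\ast}$ and $f^{\ast}$. Since translating $t$ by one lattice step $t_{1,L}=\frac{\pi}{LB}$ only relabels the summation index, $\phi_{g}$ is continuous and periodic with period $\frac{\pi}{LB}$, hence attains its maximum at some $t^{\ast}\in[0,\frac{\pi}{LB}]$, so that $|T_{g}^{L}|=\phi_{g}(t^{\ast})$. Freezing $t^{\ast}$, I would take $f^{\ast}$ to be the piecewise-linear continuous interpolant with $f^{\ast}(t_{l,L})=\operatorname{sign}g(t^{\ast}-t_{l,L})$; then $\|f^{\ast}\|_{\infty}\le1$ and $(T_{g}^{L}f^{\ast})(t^{\ast})=\phi_{g}(t^{\ast})=|T_{g}^{L}|$.

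The substantive step is attainment of the infimum over $g$. I would pick a minimising sequence $g_{n}\in\mathcal{M}_{L_{\epsilon}}^{B}$ with $|T_{g_{n}}^{L}|\to C_{2}(L,L_{\epsilon})$ and use that the sole free part of $\widehat{g}_{n}$ is the transition profile $\widehat{g}_{d,n}$ on $[B,L_{\epsilon}B]$, which by hypothesis is non-increasing with values in $[0,1]$. Helly's selection theorem then delivers a subsequence with $\widehat{g}_{d,n}\to\widehat{g}_{d}^{\ast}$ pointwise a.e., the limit again non-increasing and $[0,1]$-valued; fixing its two endpoint values (a null modification) produces an admissible $g^{\ast}\in\mathcal{M}_{L_{\epsilon}}^{B}$. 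As the $\widehat{g}_{n}$ live in the fixed compact band and are dominated by $1$, dominated convergence gives $\widehat{g}_{n}\to\widehat{g}^{\ast}$ in $L^{1}$, and hence $g_{n}\to g^{\ast}$ uniformly on $\mathbb{R}$.

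Finally I would close by lower semicontinuity. For each $t$ and each finite $N$, $\frac{\pi}{LB}\sum_{|l|\le N}|g^{\ast}(t-t_{l,L})|=\lim_{n}\frac{\pi}{LB}\sum_{|l|\le N}|g_{n}(t-t_{l,L})|\le\liminf_{n}|T_{g_{n}}^{L}|=C_{2}(L,L_{\epsilon})$; letting $N\to\infty$ yields $\phi_{g^{\ast}}(t)\le C_{2}(L,L_{\epsilon})$ for every $t$, so $|T_{g^{\ast}}^{L}|=\sup_{t}\phi_{g^{\ast}}(t)\le C_{2}(L,L_{\epsilon})$. The reverse inequality is immediate since $g^{\ast}\in\mathcal{M}_{L_{\epsilon}}^{B}$, so $g^{\ast}$ is extremal, and applying the second step to $g^{\ast}$ supplies the matching $t^{\ast}$ and $f^{\ast}$ with $C_{2}(L,L_{\epsilon})=(T_{g^{\ast}}^{L}f^{\ast})(t^{\ast})$. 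The main obstacle is exactly this compactness-plus-semicontinuity pairing: the monotonicity assumption on $\widehat{g}_{d}$ is what makes Helly applicable, and because the infinite sum defining $\phi_{g}$ need not converge uniformly under $g_{n}\to g^{\ast}$, the bound must be routed through finite truncations and a Fatou-type estimate rather than through continuity of $g\mapsto\phi_{g}$.
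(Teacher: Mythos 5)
The paper never proves this theorem in-text---the proof is explicitly omitted and delegated to the thesis \cite{Wunder2003_PhD}---so there is no in-paper argument to compare against line by line; your proposal has to stand on its own, and in outline it does. The reduction of the operator norm to $\sup_{t}\phi_{g}(t)$ with $\phi_{g}(t):=\frac{\pi}{LB}\sum_{l}\left\vert g\left(t-t_{l,L}\right)\right\vert$ (sign pattern realised by a continuous interpolant of sup-norm one), the attainment of $t^{\ast}$ via periodicity of $\phi_{g}$ over one lattice cell, the direct-method scheme of minimising sequence plus Helly selection on the monotone transition profiles $\widehat{g}_{d,n}$, uniform convergence $g_{n}\to g^{\ast}$ from $L^{1}$-convergence of the spectra on the fixed band, and the truncation/Fatou closing estimate are all sound, and you correctly identify that the monotonicity assumption on $\widehat{g}_{d}$ is what makes the compactness step work. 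One small point of rigor: continuity of $\phi_{g}$ needs locally uniform convergence of the series, which for band-limited $L^{1}$ kernels follows from a Plancherel--P\'{o}lya/Nikol'skii-type amalgam bound; the finiteness the paper records before Problem~2 is only the pointwise statement. Also, the ``null modification'' at the endpoints is unnecessary, since Helly's theorem gives pointwise convergence at \emph{every} point, so $\widehat{g}_{d}^{\ast}(B)=1$ and $\widehat{g}_{d}^{\ast}(L_{\epsilon}B)=0$ are inherited directly.

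There is, however, one genuine ordering flaw you should repair: immediately after the Helly step you declare the limit ``an admissible $g^{\ast}\in\mathcal{M}_{L_{\epsilon}}^{B}$'', but admissibility includes $g^{\ast}\in L^{1}\left(\mathbb{R}\right)$, which does \emph{not} follow from compactness of the profiles. A pointwise limit of non-increasing $[0,1]$-valued profiles can be a step function, in which case $\widehat{g}^{\ast}$ degenerates to the indicator of an interval and $g^{\ast}$ is a sinc-type kernel lying outside $L^{1}$. Your own final estimate rescues this, but only if it is run \emph{before} admissibility is claimed: the uniform convergence $g_{n}\to g^{\ast}$ and the finite truncations give $\phi_{g^{\ast}}(t)\leq C_{2}\left(L,L_{\epsilon}\right)$ for every $t$, and integrating over one period yields $\left\Vert g^{\ast}\right\Vert_{1}=\int_{0}^{t_{1,L}}\sum_{l}\left\vert g^{\ast}\left(t-t_{l,L}\right)\right\vert dt\leq C_{2}\left(L,L_{\epsilon}\right)<\infty$, so $g^{\ast}\in L^{1}$ and hence $g^{\ast}\in\mathcal{M}_{L_{\epsilon}}^{B}$ \emph{a posteriori}; only then is the reverse inequality $\left\vert T_{g^{\ast}}^{L}\right\vert\geq C_{2}\left(L,L_{\epsilon}\right)$ available, since it presupposes membership in the feasible set. (Note this integral identity is exactly the mechanism of the paper's later $\mathcal{L}^{1}$-norm theorem, $\inf_{g}\left\Vert g\right\Vert_{1}\leq C_{2}\left(L\right)$, so the fix is fully consistent with the paper's machinery.) With that reordering, your argument is a complete and self-contained proof of the existence statement.
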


The proof is omitted and can be found in \cite{Wunder2003_PhD}.

The main connection of the related problem to PVB is clearly
\begin{equation}
C_{1}\left(  L\right)  \leq C_{2}\left(  L,L_{\epsilon}\right)  ,\;\text{
provided }L\geq\frac{L_{\epsilon}+1}{2}, \label{eqn:c1c2_rel}%
\end{equation}
i.e. $C_{1}\left(  L\right)  $ represents a lower bound on what can be
achieved for $C_{2}\left(  L,L_{\epsilon}\right)  $. Using this approach it
was shown in \cite{wunder_03_sig} that:

\begin{theorem}
\label{theorem:c2_sota} Suppose $L>1$. Then:
\begin{equation}
C_{2}\left(  L,L_{\epsilon}\right)  \leq\sqrt{\frac{L_{\epsilon}%
+1}{L_{\epsilon}-1}} \label{eqn:c2_sota}%
\end{equation}
Furthermore, for very small $L_{\epsilon}$:%
\[
C_{2}\left(  L,L_{\epsilon}\right)  =\frac{2}{\pi}\log\left(  \frac
{2L_{\epsilon}}{L_{\epsilon}-1}\right)  +O(1)
\]

\end{theorem}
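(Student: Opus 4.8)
The plan is to bound the operator norm $\left\vert T_g^L\right\vert$ directly in terms of an $L^1$-type sum and then optimize over the admissible kernels $g\in\mathcal{M}_{L_\epsilon}^B$. Starting from the definition of $T_g^L$ acting on $f\in C(\mathbb{R})$ with $\left\Vert f\right\Vert_\infty\leq1$, the triangle inequality gives
\[
\left\vert T_g^L\right\vert=\sup_{t\in\left[0,\frac{\pi}{LB}\right]}\frac{\pi}{LB}\sum_{l=-\infty}^{+\infty}\left\vert g\left(t-t_{l,L}\right)\right\vert,
\]
since the worst-case input simply matches the signs of the kernel samples and the sum is periodic in $t$ with period $\frac{\pi}{LB}$. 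So the first step is to establish that the operator norm equals this periodized absolute-sum (the Lebesgue-constant of the sampling operator), which reduces the problem to estimating $\frac{\pi}{LB}\sum_l\left\vert g(t-t_{l,L})\right\vert$ for a cleverly chosen $g$.

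Next I would pick an explicit kernel to get an upper bound on the infimum. The trapezoidal kernel $S_{L_\epsilon}$ of (\ref{eqn:trapezoidal_kernel}) is the natural candidate, and its explicit $\mathrm{sinc}$-product form makes the periodized sum tractable. The idea is to split the sum into the central term(s) near $t$ and a tail, use the $1/t^2$ decay of $S_{L_\epsilon}$ to bound the tail by a convergent series, and combine with the central contribution. Because $B$ scales out, I can set $B=\pi$ (or normalize $LB$) to simplify. A cleaner route to the $\sqrt{(L_\epsilon+1)/(L_\epsilon-1)}$ form is via Cauchy--Schwarz / Parseval: write the periodized sum as an inner product and bound it using the $L^2$-energy of $\widehat{g}$, since $\left\Vert g\right\Vert_2^2=\frac{1}{2\pi}\left\Vert\widehat{g}\right\Vert_2^2$ and $\widehat{g}$ is supported on $[-L_\epsilon B,L_\epsilon B]$ with $0\leq\widehat{g}\leq1$ (giving energy at most $\propto L_\epsilon B$), while the denominator $LB\geq\frac{(L_\epsilon+1)B}{2}$ supplies the matching factor. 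Balancing numerator and denominator should yield exactly the square-root bound, which is why the hypothesis only needs $L>1$ together with the implicit oversampling constraint.

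For the asymptotic statement as $L_\epsilon\to1^+$, I would instead compute the periodized sum more precisely rather than bounding it crudely. Here the kernel's Fourier transform occupies only a thin band $[B,L_\epsilon B]$ beyond the passband, and the controlling behavior is the classical logarithmic divergence of the Dirichlet/Fejér Lebesgue constant. The plan is to approximate $S_{L_\epsilon}$ in the regime $L_\epsilon\approx1$, extract the leading $\frac{2}{\pi}\log\frac{1}{L_\epsilon-1}$ growth from the near-critical sampling (the same mechanism that makes $C_1(1)=+\infty$), and absorb the constant $\frac{2}{\pi}\log(2L_\epsilon)$ plus bounded remainder into the $O(1)$ term.

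I expect the main obstacle to be the asymptotic $O(1)$ estimate: establishing the logarithmic leading term with the correct constant $\frac{2}{\pi}$ requires a careful, non-crude evaluation of the periodized absolute sum $\sum_l\left\vert S_{L_\epsilon}(t-t_{l,L})\right\vert$ near $L_\epsilon=1$, controlling the alternating-sign cancellations that the absolute value destroys. The square-root bound, by contrast, should follow relatively cleanly from the energy/Cauchy--Schwarz argument once the operator-norm characterization is in place.
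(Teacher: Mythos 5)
The paper itself gives no proof of this theorem (it is quoted from \cite{wunder_03_sig}), so I judge your proposal against the argument that result rests on. Your first step is fine: the operator norm does equal the periodized Lebesgue-type sum $\sup_{t}\frac{\pi}{LB}\sum_{l}\left\vert g\left(t-t_{l,L}\right)\right\vert$, by sign-matching with a continuous test function. The genuine gap is your ``cleaner'' Cauchy--Schwarz step. You cannot bound the $\ell^{1}$-type sum $\sum_{l}\left\vert g\left(t-t_{l,L}\right)\right\vert$ by the $L^{2}$-energy of $\widehat{g}$: Cauchy--Schwarz needs two square-summable factors, and pairing against the constant sequence diverges. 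There is also a structural sanity check that rules out your scheme outright: a bound depending only on $\Vert\widehat{g}\Vert_{2}^{2}\leq 2L_{\epsilon}B$ and on $LB\geq(L_{\epsilon}+1)B/2$ stays bounded as $L_{\epsilon}\rightarrow 1$, whereas the second half of the very theorem you are proving says $C_{2}$ diverges like $\frac{2}{\pi}\log\frac{1}{L_{\epsilon}-1}$; no estimate of that shape can produce the factor $(L_{\epsilon}-1)^{-1/2}$. The fix is the product structure you noticed but did not exploit: write the trapezoidal kernel (\ref{eqn:trapezoidal_kernel}) as $S_{L_{\epsilon}}=uv$ with $u\left(t\right)=\sin\left(\frac{(L_{\epsilon}+1)Bt}{2}\right)/t$ and $v\left(t\right)=2\sin\left(\frac{(L_{\epsilon}-1)Bt}{2}\right)/\left(\pi(L_{\epsilon}-1)Bt\right)$, and apply Cauchy--Schwarz \emph{across the two factors}. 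Since $u^{2}\in\mathcal{PW}_{(L_{\epsilon}+1)B}^{1}$ and $2LB\geq(L_{\epsilon}+1)B$ --- exactly where the constraint $L\geq(L_{\epsilon}+1)/2$ enters --- Poisson summation evaluates $\sum_{l}u^{2}\left(t-t_{l,L}\right)=\frac{LB}{\pi}\Vert u\Vert_{2}^{2}$ exactly, independently of $t$, and likewise for $v$; with $\Vert u\Vert_{2}^{2}=\pi(L_{\epsilon}+1)B/2$ and $\Vert v\Vert_{2}^{2}=2/\left(\pi(L_{\epsilon}-1)B\right)$ one gets precisely $\Vert u\Vert_{2}\Vert v\Vert_{2}=\sqrt{(L_{\epsilon}+1)/(L_{\epsilon}-1)}$, the divergence being carried by $\Vert v\Vert_{2}$.

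The second gap concerns the asymptotic statement, which is an \emph{equality} for $C_{2}\left(L,L_{\epsilon}\right)=\inf_{g}\left\vert T_{g}^{L}\right\vert$. Your plan --- a careful evaluation of $\sum_{l}\left\vert S_{L_{\epsilon}}\left(t-t_{l,L}\right)\right\vert$ near $L_{\epsilon}=1$ --- can at best deliver the upper bound $\frac{2}{\pi}\log\frac{1}{L_{\epsilon}-1}+O(1)$ (this is essentially the Lebesgue constant of a de la Vall\'{e}e-Poussin-type kernel), but it says nothing about the matching lower bound, namely that \emph{every} admissible decay profile $\widehat{g}_{d}$ on $\left[B,L_{\epsilon}B\right]$ satisfies $\left\vert T_{g}^{L}\right\vert\geq\frac{2}{\pi}\log\frac{1}{L_{\epsilon}-1}-O(1)$. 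Since the infimum is over all of $\mathcal{M}_{L_{\epsilon}}^{B}$, that lower bound is the substantive half and needs a separate argument (e.g., testing the operator against sign patterns and reducing to a best-approximation or Lebesgue-constant lower bound); your proposal never addresses it. Minor remark: the tail-splitting route you first sketch, using only the $1/t^{2}$ decay of $S_{L_{\epsilon}}$, would yield constants of order $(L_{\epsilon}-1)^{-1}$ rather than the square root, so abandoning it was right --- but its replacement must be the two-factor split above, not a global energy bound.
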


Setting $L=L_{\epsilon}$ (as done in \cite{wunder_03_sig}), by virtue of 
(\ref{eqn:c1c2_rel}) we have
\[
C_{1}\left(  L\right)  \leq C_{2}\left(  L,L_{\epsilon}\right)  \leq
\sqrt{\frac{L+1}{L-1}}%
\]
and the bound is better for $L<2$ compared to the $1/\cos(\frac{\pi}{2L})$ law
but still quite loose. However, a careful analysis reveals that by Theorem~\ref{theorem:poisson}
 only $L\geq\frac{L_{\epsilon}+1}{2}$ is required so that
for the same $L$ the expansion factor $L_{\epsilon}$ can be pushed to
$L_{\epsilon}\leq2L-1$ and since the RHS of (\ref{eqn:c2_sota}) is monotone in
$L$ we obtain:
\begin{equation}
C_{1}\left(  L\right)  \leq\sqrt{\frac{2L}{2L-2}}=\sqrt{\frac{L}{L-1}}
\label{eqn:c1_sqrt}%
\end{equation}

\begin{remark}
In this setting, i.e. $L=2$, $L_{\epsilon}=2L-1=3$ the trapezoidal kernel is
the (optimal) extremal filter $g_{L_{\epsilon}}^{\ast}$ for the operator norm
$\left\vert T_{g,L}\right\vert $ since:
\[
C_{1}\left(  2\right)  =\frac{1}{\cos\left(  \frac{\pi}{2\cdot2}\right)
}=\sqrt{\frac{2}{2-1}}=\sqrt{2}%
\]

\end{remark}

We are now improving on this result.

\section{Main results}

In order to get an upper bound we make the following reasoning: Assume without
loss of generality $B=\pi$, let $K_{n}\in\mathcal{PW}_{\frac{\pi}{n}}^{2}%
\cap\mathcal{PW}_{\frac{\pi}{n}}^{1}$ for some natural $n\geq1$ be given as%
\[
K_{n}\left(  t\right)  :=\frac{2n\sin^{2}\left(  \frac{\pi t}{n2}\right)
}{\pi^{2}t^{2}}\geq0,
\]
hence some triangle kernel. The reason for the normalization of the bandwidth
to $\frac{\pi}{n}$ will become clear later on. At this point we could have
used $\pi$ (or any other value) as well. We need the following sub-sampling property.

\begin{lemma}
\label{lemma:fejer} Let $K_{n}\in\mathcal{PW}_{\frac{\pi}{n}}^{2}%
\cap\mathcal{PW}_{\frac{\pi}{n}}^{1}$ for some natural $n\geq1$, and
$K_{n}\left(  0\right)  =1,K_{n}\left(  \frac{\pi}{n}\right)  =0$. Then, we
have for some positive real $a\leq2$%
\[
\sum_{l=-\infty}^{+\infty}K_{n}\left(  t-lan\right)  =\frac{1}{an}.
\]

\end{lemma}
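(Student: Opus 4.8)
The plan is to prove the identity by Poisson summation, reading the two hypotheses as conditions on the Fourier transform $\widehat{K_n}$: it is supported on $[-\pi/n,\pi/n]$ (the Paley--Wiener condition), DC-normalized by $\widehat{K_n}(0)=1$, and vanishing at the band edges, $\widehat{K_n}(\pm\pi/n)=0$ (indeed the constant $1/(an)$ on the right-hand side is exactly $\widehat{K_n}(0)/(an)$, which fixes the normalization as a spectral one). Since $K_n\in\mathcal{L}^{1}\left(\mathbb{R}\right)$ with quadratic decay $K_n(t)=O(t^{-2})$, the periodization $\sum_l K_n(t-lan)$ converges absolutely and uniformly, and the Poisson summation formula with period $T=an$ gives
\[
\sum_{l=-\infty}^{+\infty}K_n(t-lan)=\frac{1}{an}\sum_{k=-\infty}^{+\infty}\widehat{K_n}\!\left(\frac{2\pi k}{an}\right)e^{i2\pi kt/(an)}.
\]

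First I would confirm these spectral properties for the concrete kernel by computing $\widehat{K_n}$ directly: writing $K_n(t)=\frac{2n}{\pi^2}\big(\frac{\sin(\pi t/(2n))}{t}\big)^2$ and transforming the square via the convolution theorem yields a triangular (tent) spectrum $\widehat{K_n}(\omega)=\frac{n}{\pi}\big(\frac{\pi}{n}-|\omega|\big)_+$, which indeed satisfies $\widehat{K_n}(0)=1$, $\widehat{K_n}(\pm\pi/n)=0$, and is supported on $[-\pi/n,\pi/n]$. The decisive step is then to locate the sampling points $\omega_k=2\pi k/(an)$ relative to this support: for every $k\neq0$ one has $|\omega_k|\geq 2\pi/(an)$, and the bound $a\leq2$ forces $2\pi/(an)\geq\pi/n$, so each $\omega_k$ with $k\neq0$ lies on or outside the closed band edge.

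Consequently every term with $k\neq0$ in the Poisson series vanishes, while the $k=0$ term contributes $\frac{1}{an}\widehat{K_n}(0)=\frac{1}{an}$, proving the claim. The only delicate case is the extremal configuration $a=2$, $k=\pm1$, where $\omega_k=\pm\pi/n$ lands exactly on the band edge rather than strictly outside the support; this is precisely where the endpoint hypothesis $\widehat{K_n}(\pi/n)=0$ is used, and where the restriction $a\leq2$ turns out to be sharp. I expect the only genuine work to be the rigorous justification of Poisson summation (absolute convergence of both series together with the termwise interchange), which is routine here since $\widehat{K_n}$ is continuous and compactly supported, so the right-hand series is effectively a single nonvanishing term.
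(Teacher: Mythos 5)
Your proof is correct and takes essentially the same route as the paper: Poisson summation with period $an$, after which band-limitation of $\widehat{K_n}$ to $[-\pi/n,\pi/n]$ together with $a\le 2$ annihilates every spectral sample $\widehat{K_n}\left(2\pi k/(an)\right)$ with $k\neq 0$, leaving only $\widehat{K_n}(0)/(an)=1/(an)$. Your reading of the hypotheses $K_n(0)=1$, $K_n(\pi/n)=0$ as conditions on the spectrum is the correct interpretation of the paper's (sloppily stated) lemma, and your explicit handling of the endpoint case $a=2$, $k=\pm 1$ via $\widehat{K_n}(\pi/n)=0$ makes precise a detail the paper's one-line proof glosses over.
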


\begin{proof}
Let $\hat{K}_{n}$ be the spectrum of $K_{n}\in\mathcal{L}^{1}\left(
\mathbb{R}\right)  $. By the Poisson sum formula we have:%
\[
\sum_{m=-\infty}^{+\infty}K_{n}\left(  t-man\right)  =\frac{1}{an}%
\sum_{k=-\infty}^{+\infty}\hat{K}_{n}\left(  \omega_{k}\right)  e^{j\omega
_{k}t}%
\]
with $\omega_{k}=\frac{2\pi k}{an}$. By assumption $\hat{K}_{n}\left(
\omega_{k}\right)  =0$ for $k\geq1$, which gives the final result.
\end{proof}

The following theorem is an upper bound on $C_{2}\left(  L\right)  $
generalizing \cite[Theorem 4.10]{litsyn_07}.

\begin{theorem}
\label{theorem:c2} Let $L_{\epsilon}=\frac{n+1}{n},L=\frac{n+m}{n}%
,n\in\mathbb{N},m\in\mathbb{N}\cup\{\frac{1}{2}\}$. Then:
\begin{align}
C_{2}\left(  L,L_{\epsilon}\right) & \leq\max_{-\frac{n}{2\left(  n+1\right)} \leq t\leq\frac{n}{2\left(  n+1\right)  }}\frac{1}{2\left(  n+m\right)  } \nonumber \\
&\quad\quad\sum_{l=0}^{2\left(  n+m\right)  -1}\left\vert \sum_{k=-n}^{n}e^{jk\left(
\frac{\pi t}{n}-\frac{l\pi}{n+m}\right)  }\right\vert 
\label{eqn:c2_new}%
\end{align}
Moreover:%
\[
C_{2}\left(  L\right)  \geq C_{1}^{\mathcal{T}_{N}}\left(  N_{1}\right)
\]
Here, $N,N_{1}$ are taken from any uniform sampling of (frequency) support
$\left[  0,\frac{n+1}{n}\pi\right]  $ of some $g\in\mathcal{M}_{\frac{n+1}{n}%
}^{\pi}$ where all frequencies that fall in the interval where the response
equals unity gives the highest in-band frequency $N$ and all that fall
out-of-band give $N_{2}-N$.
\end{theorem}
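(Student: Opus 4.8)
The plan is to prove the upper bound by exhibiting a single admissible kernel and computing its operator norm exactly, and to prove the lower bound by specialising the related problem to periodic inputs. For the upper bound, since $C_2(L,L_\epsilon)=\inf_{g}\lvert T_g^L\rvert$, it suffices to bound $\lvert T_g^L\rvert$ for one convenient $g\in\mathcal M_{L_\epsilon}^{B}$. First I would record the standard operator-norm identity
\[
\lvert T_g^L\rvert=\sup_{t}\frac{\pi}{LB}\sum_{l=-\infty}^{+\infty}\bigl\lvert g(t-t_{l,L})\bigr\rvert,
\]
where ``$\le$'' is the triangle inequality and ``$\ge$'' follows by choosing a continuous $f$ with $\lVert f\rVert_\infty\le1$ whose samples match $\operatorname{sgn} g(t^\ast-t_{l,L})$ at a maximising $t^\ast$. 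Note that $L\ge(L_\epsilon+1)/2$ holds precisely because $m\ge\tfrac12$, so Theorem~\ref{theorem:poisson} applies and $g$ genuinely reproduces $\mathcal{PW}_B^\infty$.

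The crucial step is the choice $g=S_{L_\epsilon}$, the trapezoidal kernel (\ref{eqn:trapezoidal_kernel}), together with the factorisation that, after setting $B=\pi$ and $L_\epsilon=(n+1)/n$, reads
\[
S_{L_\epsilon}(t)=K_n(t)\sum_{k=-n}^{n}e^{jk\pi t/n},
\]
i.e. the trapezoidal kernel equals the \emph{nonnegative} triangle kernel $K_n$ of Lemma~\ref{lemma:fejer} times the degree-$n$ Dirichlet kernel $D_n(\pi t/n)=\sin\!\bigl(\tfrac{(2n+1)\pi t}{2n}\bigr)/\sin\!\bigl(\tfrac{\pi t}{2n}\bigr)$. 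Because the sample spacing is $t_{l,L}=ln/(n+m)$, the Dirichlet factor $\lvert D_n(\tfrac{\pi t}{n}-\tfrac{l\pi}{n+m})\rvert$ is periodic in $l$ with period $2(n+m)$. I would then collapse the infinite sum by grouping $l$ into residues $r\in\{0,\dots,2(n+m)-1\}$ modulo $2(n+m)$: within each class the Dirichlet factor is constant while the triangle factors run through an arithmetic progression of spacing $2n$. Since $K_n\ge0$ we may write $\lvert K_n D_n\rvert=K_n\lvert D_n\rvert$, pull the constant Dirichlet value out, and evaluate $\sum_p K_n(\,\cdot\,-2np)=\tfrac{1}{2n}$ by Lemma~\ref{lemma:fejer} with $a=2$ (admissible because $\widehat K_n$ vanishes at the band edge $\pi/n$). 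This yields $\sum_l\lvert S_{L_\epsilon}(t-t_{l,L})\rvert=\tfrac{1}{2n}\sum_{r}\lvert D_n(\tfrac{\pi t}{n}-\tfrac{r\pi}{n+m})\rvert$; multiplying by $\pi/(LB)=n/(n+m)$ produces the prefactor $1/(2(n+m))$, and the periodicity and evenness of the resulting Lebesgue function reduce the supremum over $t$ to the stated compact interval, giving (\ref{eqn:c2_new}).

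For the lower bound I would specialise the related problem. By (\ref{eqn:c1c2_rel}) we have $C_1(L)\le C_2(L,L_\epsilon)$, valid at $L=L_\epsilon$ since there the hypothesis reduces to $L\ge1$. Restricting the optimiser class in the peak value problem to the periodic signals $\mathcal T_N\subset\mathcal{PW}_N^\infty$ with the matching sample spacing turns it into exactly the trigonometric interpolation constant $C_1^{\mathcal T_N}(N_1)$; as this is a maximum over a subset of the feasible set defining $C_1(L)$, we obtain $C_1^{\mathcal T_N}(N_1)\le C_1(L)\le C_2(L)$. The correspondence $(N,N_1)\leftrightarrow(n,L)$ is, as stated, obtained by sampling the frequency support $[0,\tfrac{n+1}{n}\pi]$ uniformly, the in-band nodes fixing $N$ and the total count fixing $N_1$.

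The main obstacle is the collapse in the upper bound: it works only because $K_n$ is nonnegative, so the absolute value passes through the product, and because the sub-sampling value is exactly $1/(2n)$ at the boundary rate $a=2$; any sign change in $K_n$ would break the factorisation and leave an intractable $\ell^1$ sum. For the lower bound the delicate point is lining up the continuous bandwidth and oversampling data with the discrete degrees $N,N_1$ so that the inclusion $\mathcal T_N\subset\mathcal{PW}_N^\infty$, and hence the inequality, points in the right direction.
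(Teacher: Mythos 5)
Your proposal is correct and follows essentially the same route as the paper: the paper likewise takes the trapezoidal kernel, factors it as the nonnegative triangle kernel $K_n$ times the Dirichlet sum $\sum_{k=-n}^{n}e^{jk\pi t/n}$, exploits the $2(n+m)$-periodicity of the Dirichlet factor in $l$, and collapses the lattice sum via Lemma~\ref{lemma:fejer} at the boundary rate $a=2$ to obtain the prefactor $\frac{1}{2(n+m)}$ --- indeed your framing of $C_{2}$ as an infimum (so the single-kernel computation yields ``$\leq$'') is slightly more careful than the paper's equality sign. The paper omits the lower-bound construction for lack of space, and your sketch via (\ref{eqn:c1c2_rel}) together with the inclusion $\mathcal{T}_{N}\subset\mathcal{PW}_{N}^{\infty}$ is consistent with the stated claim.
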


\begin{proof}
For the purpose of practical applicability let us assume a more general
setting. We assume that the kernel is in $g\in\mathcal{M}_{L_{\epsilon}}^{\pi
}$ where $L_{\epsilon}:=\frac{n+1}{n},n\in\mathbb{N},$ and let the
oversampling factor to be of the form%
\[
L=\frac{n+m}{n}%
\]
for some $n\in\mathbb{N},m\in\mathbb{N}\cup\{\frac{1}{2}\}$. Then, the
trapezoidal kernel can be represented as%
\[
g_{L_{\epsilon}}\left(  t\right)  =\sum_{k=-n}^{n}K_{n}\left(  t\right)
e^{j\frac{k\pi t}{n}}%
\]
where $K_{n}$ is a kernel given by%
\[
K_{n}\left(  t\right)  =\frac{2n\sin^{2}\left(  \frac{\pi t}{n2}\right)  }%
{\pi^{2}t^{2}}\geq0,
\]
the so-called triangle kernel, which can be seen as a special case of the
trapezoidal kernel \cite{Wunder2015_SPAWC}. This is illustrated in Fig.~\ref{fig:fejer}a. Clearly, $K_{n}%
\in\mathcal{PW}_{\frac{\pi}{n}}^{2}\cap\mathcal{PW}_{\frac{\pi}{n}}^{1}$.
Hence, we obtain
\begin{align*}
C_{2}\left(  L\right)   &  =\sup_{t\in \Omega}\frac{1}{L}\sum_{l=-\infty}^{\infty
}\left\vert g_{L_{\epsilon}}\left(  t-l\frac{1}{L}\right)  \right\vert \\
&  =\max_{t\in\Omega}\frac{1}{2\left(  n+m\right)  }\sum_{l=0}^{2\left(  n+m\right)
-1}\left\vert \sum_{k=-n}^{n}e^{jk\left(  \frac{\pi t}{n}-\frac{l\pi}%
{n+m}\right)  }\right\vert
\end{align*}
where $\Omega=[t: -\frac{n}{2\left(  n+1\right)  }\leq
t\leq\frac{n}{2\left(  n+1\right)  }]$. 
Since the term $|\sum_{k=-n}^{n}e^{jk\left(  \frac{\pi t}{n}-\frac{l\pi}%
{n+m}\right)  }|$ is periodic with $2\left(  n+m\right)  $ (we sample it at
$\frac{l\pi}{n+m}$), $K\left(  t\right)  \geq0$, and finally%
\begin{align*}
\frac{n}{n+k}\sum_{l=-\infty}^{\infty} &K_{n}\left(  t-\frac{nl\cdot2\left(
n+m\right)  }{n+m}\right)   \\
&  =\frac{1}{2\left(  n+m\right)  }%
\end{align*}
due to Lemma~\ref{lemma:fejer}.

The construction of the lower bound is omitted due to lack of space.
\end{proof}

For $L_{\epsilon}\neq L$ we obtain the following corollary:

\begin{corollary}
\label{corollary} Let $L=\frac{n+1}{n},\;n$ even. Then:%
\[
C_{1}\left(  L\right)  \leq\max_{-\frac{n}{2\left(  n+1\right)  }\leq
t\leq\frac{n}{2\left(  n+1\right)  }}\frac{1}{n+1}\sum_{l=0}^{n}\left\vert
\sum_{k=-n/2}^{n/2}e^{jk\left(  \frac{2\pi t}{n}-\frac{l2\pi}{n+m}\right)
}\right\vert
\]

\end{corollary}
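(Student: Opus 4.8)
The plan is to obtain the corollary as a specialization of Theorem~\ref{theorem:c2}, using the half-integer case $m=\tfrac12$ it permits, together with the lower-bound relation (\ref{eqn:c1c2_rel}). Since $n$ is even, I would set $n':=n/2\in\mathbb{N}$ and $m':=\tfrac12\in\mathbb{N}\cup\{\tfrac12\}$, so that the hypotheses of Theorem~\ref{theorem:c2} hold for the pair $(n',m')$. This choice induces the expansion factor $L_\epsilon'=\frac{n'+1}{n'}=\frac{n+2}{n}$ and the oversampling factor $L=\frac{n'+m'}{n'}=\frac{n+1}{n}$, matching the $L$ in the statement (and $L_\epsilon'\neq L$, consistent with the corollary's heading).

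First I would pass from $C_1$ to $C_2$. Inequality (\ref{eqn:c1c2_rel}) applies provided $L\geq\frac{L_\epsilon'+1}{2}$, and here $\frac{L_\epsilon'+1}{2}=\frac12\big(\frac{n+2}{n}+1\big)=\frac{2n+2}{2n}=\frac{n+1}{n}=L$, so the hypothesis holds with equality, giving $C_1(L)\leq C_2(L,L_\epsilon')$. It then suffices to bound $C_2(L,L_\epsilon')$ by Theorem~\ref{theorem:c2} evaluated at $(n',m')$.

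Next I would substitute $n'=n/2$ and $m'=\tfrac12$ into the summand of (\ref{eqn:c2_new}). The prefactor becomes $\frac{1}{2(n'+m')}=\frac{1}{n+1}$; the outer index runs over $l=0,\dots,2(n'+m')-1=n$; the inner index runs over $k=-n'$ to $n'$, i.e. $-n/2$ to $n/2$ (parity of $n$ is again essential so that $n/2$ is an integer limit); and the phase simplifies to $\frac{\pi t}{n'}-\frac{l\pi}{n'+m'}=\frac{2\pi t}{n}-\frac{2l\pi}{n+1}$, reproducing the claimed summand (with $m=1$, so $n+m=n+1$). These substitutions match the right-hand side term by term.

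The step I expect to be the main obstacle is fixing the range of the $t$-maximization after reparametrization. The function being maximized, $\frac{1}{L}\sum_l|g_{L_\epsilon'}(t-l/L)|$ (equivalently the displayed trigonometric sum), is periodic in $t$ with period $1/L=\frac{n}{n+1}$, so a fundamental domain centered at the origin is exactly $[-\frac{n}{2(n+1)},\frac{n}{2(n+1)}]$, as stated. The subtlety is that the interval symbol in (\ref{eqn:c2_new}) is tied to the normalization of that theorem; rather than literally inserting $n'$ into it (which would give the strictly smaller interval $[-\frac{n}{2(n+2)},\frac{n}{2(n+2)}]$, shorter than one period), I would read off the domain directly from the period $1/L$. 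The same periodicity that reduces the infinite $l$-sum in (\ref{eqn:c2_new}) to $l=0,\dots,n$ via Lemma~\ref{lemma:fejer} guarantees the supremum is attained within this fundamental domain, so the maximum over it equals the global maximum. With the interval settled, the chain $C_1(L)\leq C_2(L,L_\epsilon')$, bounded in turn by the right-hand side of (\ref{eqn:c2_new}) at $(n',m')$, yields the displayed inequality.
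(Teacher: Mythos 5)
Your proof is correct and follows essentially the same route as the paper: the paper likewise specializes Theorem~\ref{theorem:c2} to $m=1/2$ and renames $n'=2n$ (the inverse of your substitution $n'=n/2$, $m'=1/2$), then passes from $C_{2}$ to $C_{1}$ via (\ref{eqn:c1c2_rel}), noting as you do that $L\geq(L_{\epsilon}+1)/2$ holds with equality. Your additional care with the maximization interval---reading it off as the fundamental half-period $[-\frac{1}{2L},\frac{1}{2L}]=[-\frac{n}{2(n+1)},\frac{n}{2(n+1)}]$ rather than literally substituting $n'$ into the theorem's $\Omega$---resolves a discrepancy the paper's one-line proof silently glosses over, and your resolution is the correct one.
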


\begin{proof}
Setting $m=1/2$ yields $L=(2n+1)/2n$. Replace $n^{\prime}=2n$ so that
$L_{\epsilon}$ can be pushed towards $L_{\epsilon}:=(n^{\prime}+2)/n^{\prime}$
yields the result provided $n^{\prime}$ is even and since $L\geq(L_{\epsilon
}+1)/2$ for any $n$.
\end{proof}

Numerical computations of the operator norm for different trigonometric
polynomials were carried out and shown in Fig.~\ref{fig:eval_norm} along with
the bounds in (\ref{eqn:c1_cos}), (\ref{eqn:c1_sqrt}),
(\ref{eqn:c2_sota}) and the new bound in (\ref{eqn:c2_new}) where, for
the sake of simplicity in both figures, the curves are depicted over
$\mathbb{R}$. It is observed that there is strong improvement of the new over
the existing bounds.

\begin{figure}[ptb]
\centering
\includegraphics[width=\columnwidth]{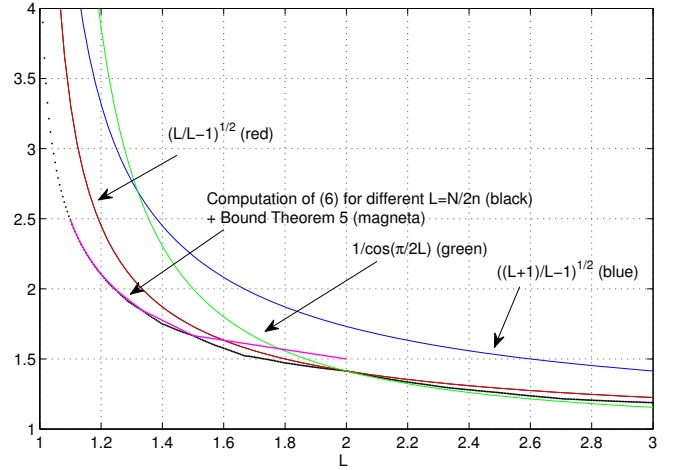}
\caption{Bounds on $C_{2}(L),C_{2}(L,L_{\epsilon})$}%
\label{fig:eval_norm}
\end{figure}

\section{Extensions: Overshooting and $\mathcal{L}^{1}$-Norms of Nyquist
filters}

\subsection{Overshooting of Nyquist filters}

Trapezoidal filters have been used in \cite{Wunder2015_SPAWC} as Nyquist
filters. Obviously, to bound the overshooting for Nyquist filters we cannot
use the classical PVP results since $L\leq L_{\epsilon}$ when the Nyquist
criterion is enforced, i.e. the transmit signal is actually undersampled with
respect to the transmit signal's bandwidth. On the other hand, the new
approach can be used, specifically (\ref{eqn:c2_new}) and
(\ref{eqn:c2_sota}). More specifically, if $L_{\epsilon}=\frac{n+1}{n}%
,L=\frac{n+m}{n},n\in\mathbb{N},m\in\mathbb{N}\cup\{\frac{1}{2}\}$,
overshooting between the samples (i.e. data sequence) is upperbounded by
(without loss of generality still $B=\pi$)%
\begin{align}
\min&\Bigg\{  \sqrt{2n+1}\ ,\max_{-\frac{n}{2\left(  n+1\right)  }\leq t\leq
\frac{n}{2\left(  n+1\right)  }} \nonumber\\
&\quad\quad\quad \frac{1}{2\left(  n+m\right)  }\sum
_{l=0}^{2\left(  n+m\right)  -1}\left\vert \sum_{k=-n}^{n}e^{jk\left(
\frac{\pi t}{n}-\frac{l\pi}{n+m}\right)  }\right\vert \Bigg\}
\label{eqn:NyquistOvershoot}%
\end{align}
where we tacitly assumed that the data is bounded by unity. It is also
possible to construct more general filters which are ``layerwise" composed of
trapezoidal filters (or approximations of them), see the Fig.~
\ref{fig:fejer}b. Then we can bound the overshooting as follows:%

\begin{align*}
C_{2}\left(  L\right)   &  =\sup_{-\frac{1}{2L}\leq t\leq\frac{1}{2L}}\frac
{1}{L}\sum_{l=-\infty}^{\infty}\left\vert g_{L_{\epsilon}}\left(  t-\frac
{l}{L}\right)  \right\vert \\
&  \leq\sum_{k=0}^{n-1}(  X\left(  \omega_{k}\right)  -X\left(
\omega_{k+1}\right)  )  \\
&\quad\quad\quad\quad \underbrace{\sup_{-\frac{1}{2L}\leq t\leq
\frac{1}{2L}}\frac{1}{L}\sum_{l=-\infty}^{\infty}\left\vert S_{L_{\epsilon
}^{k}}\left(  t-\frac{l}{L}\right)  \right\vert }_{\text{bound by
(\ref{eqn:NyquistOvershoot})}}%
\end{align*}
Here, recall that $S_{L_{\epsilon}^{k}}$ is the trapezoidal kernel family,
$X\left(  \omega_{k}\right)  $ are the frequency sampling points in the decay
region, such that $X\left(  \omega_{k}\right)  -X\left(  \omega_{k+1}\right)
\geq0$ (setting $X(\omega_{k+1})=0$), and $L_{\epsilon}^{k}$ are the extension
factors, respectively. Hence, we conclude that for such general filters the
overshoot is just an average of the individual trapezoidal layers with
extension factors $L_{\epsilon}^{k},k=0,...,n-1$.

%
\subsection{$\mathcal{L}^{1}$-Norms of Nyquist filters}

The Nyquist criterion is a fundamental property ensures that samples are not
interfering with each other. In a practical system nevertheless the samples
actually do so as a consequence of the impairments of the communication
channel such as channel induced ISI, time/frequency offsets etc. Notably, it
is deliberately induced in FTN signalling. The impact of these effects is
measured by the opening in the eye diagram of the overlayed signal, and is
directly related to the $\mathcal{L}^{1}$-norm (tails) of the used Nyquist or
(FTN) filter.

Let us comment on the relation to the $\mathcal{L}^{1}$-norm of trapezoidal
filters and related families which bounds the ISI. Note first that, for the
triangle kernel, this norm is actually independent and unity for all $B$
since:%
\[
\lVert g\rVert_{1}=\frac{B}{2\pi}\int\frac{B}{2\pi}\text{sinc}^{2}\left(
\frac{Bt}{2}\right)  dt=1
\]
Here, we defined sinc$\left(  At\right)  =\sin(At)/At,A>0$. It is easy to
prove that no filter with $g(0)=1$ can fall below this value. Hence, we can
argue that for any filter $\lVert g\rVert_{1}>1$ (after proper normalization),
and that the $\mathcal{L}^{1}$-norm measures the (inverse) distance to the
sinc kernel for which clearly $\lVert g\rVert_{1}=\infty$ holds. Suppose we
want to ``shift" the triangle filter kernel to the optimal sinc kernel while
not loosing the favorable properties of the triangle kernel. This can achieved
by the trapezoidal kernel family which has close to optimal $\mathcal{L}^{1}%
$-norm behaviour as follows:

\begin{theorem}
We have:%
\[
\inf_{g\in\mathcal{M}_{L_{\epsilon}}^{B}}\left\Vert g\right\Vert _{1}\leq
C_{2}\left(  L\right)
\]

\end{theorem}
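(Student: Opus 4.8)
The plan is to exhibit, for every admissible kernel $g\in\mathcal{M}_{L_{\epsilon}}^{B}$, the pointwise inequality $\lVert g\rVert_{1}\leq\lvert T_{g}^{L}\rvert$ and then pass to the infimum over the kernel class. Recall that $C_{2}(L)=C_{2}(L,L)$, so both infima range over the same set $\mathcal{M}_{L}^{B}$ and the claim reduces to this single pointwise estimate. The whole argument rests on recognizing that the operator norm and the $\mathcal{L}^{1}$-norm are, respectively, the \emph{supremum} and the \emph{average} of one and the same periodic function.

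First I would recall the operator-norm formula already used in the proof of Theorem~\ref{theorem:c2}: choosing the worst-case input $f\in C(\mathbb{R})$, $\lVert f\rVert_{\infty}\leq1$, that aligns the signs of the samples gives (with $B=\pi$, so that $t_{l,L}=l/L$)
\[
\lvert T_{g}^{L}\rvert=\sup_{t\in[0,1/L]}\frac{1}{L}\sum_{l=-\infty}^{\infty}\bigl\lvert g\bigl(t-\tfrac{l}{L}\bigr)\bigr\rvert .
\]
Next I would introduce the periodization $P(t):=\sum_{l}\lvert g(t-l/L)\rvert$, which has period $1/L$ and is integrable over a period precisely because $g\in\mathcal{L}^{1}(\mathbb{R})$. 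Folding the integral over $\mathbb{R}$ onto a single period then yields the identity
\[
\lVert g\rVert_{1}=\int_{-\infty}^{\infty}\lvert g(t)\rvert\,dt=\int_{0}^{1/L}P(t)\,dt .
\]

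The key step is the elementary observation that the peak of a periodic function dominates its mean. Writing $\bar{P}:=L\int_{0}^{1/L}P(t)\,dt=L\lVert g\rVert_{1}$ for the mean value of $P$ over a period, we have $\sup_{t}P(t)\geq\bar{P}$, whence
\[
\lvert T_{g}^{L}\rvert=\frac{1}{L}\sup_{t}P(t)\geq\frac{1}{L}\,\bar{P}=\lVert g\rVert_{1}.
\]
Since this holds for every $g\in\mathcal{M}_{L}^{B}$, taking the infimum on both sides over this class gives $\inf_{g}\lVert g\rVert_{1}\leq\inf_{g}\lvert T_{g}^{L}\rvert=C_{2}(L)$, which is the assertion.

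I expect the only delicate points to be bookkeeping rather than substance: justifying the sign-alignment and the interchange of summation in the operator-norm formula (standard for $\mathcal{L}^{1}$ kernels and already invoked in Theorem~\ref{theorem:c2}), and the legitimacy of folding the integral onto one period, which is exactly the absolute convergence guaranteed by $g\in\mathcal{L}^{1}(\mathbb{R})$. There is no genuine optimization obstacle here, since the estimate is nothing more than the fact that the $\mathcal{L}^{1}$-norm is the average of the periodized kernel while the operator norm is its peak, so \emph{max} $\geq$ \emph{mean} does all the work.
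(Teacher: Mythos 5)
Your proof is correct and takes essentially the same route as the paper: the paper likewise folds $\left\Vert g\right\Vert_{1}$ onto a single period of the periodization $\sum_{l}\left\vert g\left(t-t_{l,L}\right)\right\vert$ (interchanging sum and integral) and then bounds the integral over one period by its length times the supremum of the periodization, which is exactly the operator norm, before taking the infimum over $g\in\mathcal{M}_{L_{\epsilon}}^{B}$. Your ``max $\geq$ mean'' phrasing is merely a restatement of the paper's final estimate $\int_{0}^{t_{1,L}}\sum_{l}\left\vert g\left(t-t_{l,L}\right)\right\vert\,dt\leq\frac{\pi}{LB}\max_{t\in\left[0,\frac{\pi}{LB}\right]}\sum_{l}\left\vert g\left(t-t_{l,L}\right)\right\vert$, so the two arguments coincide in substance.
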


\begin{proof}
It is easy to see that we have for any $g\in\mathcal{PW}_{L_{\epsilon}B}^{1}$:%
\begin{align*}
\left\Vert g\right\Vert _{1}    =\int\limits_{\mathbb{R}}\left\vert g\left(
t\right)  \right\vert dt = \sum_{l=-\infty}^{+\infty}\int\limits_{0}^{t_{1,L}}\left\vert g\left(
t-t_{l,L}\right)  \right\vert dt
\end{align*}
By the bounded convergence theorem:%
\begin{align*}
  \sum_{l=-\infty}^{+\infty}\int\limits_{0}^{t_{1,L}}\left\vert g\left(
t-t_{l,L}\right)  \right\vert dt
 & =\int\limits_{0}^{t_{1,L}}\sum_{l=-\infty}^{+\infty}\left\vert g\left(
t-t_{l,L}\right)  \right\vert dt\\
&  \leq\max_{t\in\left[  0,\frac{\pi}{LB}\right]  }\frac{\pi}{LB}%
\sum_{l=-\infty}^{+\infty}\left\vert g\left(  t-t_{l,L}\right)  \right\vert
\end{align*}
Taking the limes inferior on both sides yields the result.
\end{proof}

We have the following observations which are quite convenient for filter
design in 5G :

\begin{itemize}
\item The $\mathcal{L}^{1}$-norm is almost independent of the actual filter
bandwidth so that tail properties can be tightly controlled when scaling.

\item Trapezoidal kernels seemingly yield a good comprise between properties
of both the extreme sinc and triangle kernels.

\item $\mathcal{L}^{1}$-norm are seemingly quite good but optimality is yet to
be proven.
\end{itemize}

\section{Conclusions}

In this paper we provided new bounds on the overshooting between samples of
bandlimited signals for small oversampling factors improving on former
results. Moreover, we discussed some extension to overshooting of Nyquist
filters and related ISI bounds.

\section*{Acknowledgements}

This work was supported by German Research Foundation/Deutsche Forschungsgemeinschaft (DFG) under grant WU 598/3-1.

\bibliographystyle{IEEEtran}
\bibliography{overshoot_25-09-2015_v7}

\end{document}